\documentclass[submission,copyright,creativecommons]{eptcs}
\providecommand{\event}{GASCOM 2026} 

\usepackage{iftex}

\ifpdf
  \usepackage{underscore}         
  \usepackage[T1]{fontenc}        
\else
  \usepackage{breakurl}           
\fi

\usepackage{amssymb}
\usepackage{amsthm}
\usepackage{MnSymbol}
\usepackage{stmaryrd}
\usepackage{enumitem}
\usepackage{multicol}
\usepackage{algorithm}
\usepackage{algpseudocodex}

\title{Snake Polyominoes of Maximal Area in a Rectangle}
\author{Alexandre Blondin Mass\'e
\institute{Universit\'e du Qu\'ebec \`a Montr\'eal\\
Montr\'eal, Canada}
\email{blondin\_masse.alexandre@uqam.ca}
\and
Alain Goupil
\institute{Universit\'e du Qu\'ebec \`a Trois-Rivi\`eres\\
Trois-Rivi\`eres, Canada}
\email{alain.goupil@uqtr.ca}
}
\def\titlerunning{Snake Polyominoes of Maximal Area in a Rectangle}
\def\authorrunning{A. Blondin Mass\'e \& A. Goupil}

\newtheorem{proposition}{Proposition}
\newtheorem{theorem}{Theorem}
\newtheorem{lemma}{Lemma}

\usepackage{todonotes}

\newcommand{\E}{\square}
\renewcommand{\O}{\blacksquare}
\newcommand{\AlC}{A}
\newcommand{\DI}[1]{\llbracket #1 \rrbracket}
\newcommand{\Whw}{\mathbf{W}^{\AlC}_{h \times w}}
\newcommand{\Wh}{\mathbf{W}^{\AlC}_{h \times \_}}
\newcommand{\Ww}{\mathbf{W}^{\AlC}_{\_ \times w}}

\newcommand{\Whwv}[2]{\mathbf{W}^{\AlC}_{#1 \times #2}}
\newcommand{\Wshwv}[2]{\mathbf{W}^{\mathrm{snake}}_{#1 \times #2}}
\newcommand{\Wshw}{\Wshwv{h}{w}}

\newcommand{\SsfhwHv}[3]{\mathbf{S}^{\mathrm{s-f}}_{#1,#2,#3}}
\newcommand{\SsfhwH}{\SsfhwHv{h}{w}{H}}
\newcommand{\SsfhwaHv}[4]{\mathbf{S}^{\mathrm{s-f}}_{#1,#2,#3,#4}}
\newcommand{\SsfhwaH}{\SsfhwaHv{h}{w}{a}{H}}

\newcommand{\Wsfhwv}[2]{\mathbf{W}^{\mathrm{s-f}}_{#1 \times #2}}
\newcommand{\Wsfhw}{\Wsfhwv{h}{w}}

\newcommand{\ewh}{\varepsilon_{h \times 0}}
\newcommand{\eww}{\varepsilon_{0 \times w}}
\newcommand{\area}[1]{|#1|_{\O}}

\newcommand{\N}{\mathbb{N}}
\newcommand{\Q}{\mathbb{Q}}
\newcommand{\n}{\texttt{n}}
\newcommand{\w}{\texttt{w}}
\newcommand{\e}{\texttt{e}}
\newcommand{\s}{\texttt{s}}
\newcommand{\hcat}{\overt}
\newcommand{\vcat}{\ominus}
\newcommand{\nl}{\ell}
\newcommand{\nlout}{\ell^{\mathrm{out}}}
\newcommand{\indic}[1]{\mathbb{I}\left(#1\right)}

\newcommand{\amax}{a_{\mathrm{max}}}
\newcommand{\ahmax}{\hat{a}_{\mathrm{max}}}

\newcommand\sat{\deg=2}

\begin{document}

\maketitle


\begin{abstract}
  Given a discrete rectangle $R$ of dimensions $h \times w$, let $\Wshw$ be the set of snake-like polyominoes contained in $R$ represented as binary matrices, \emph{i.e.} polyominoes whose underlying simple graph is a chain with respect to the 4-adjacency relation.
  We present an algorithm that generates $\Wshw$ for any $h$ and $w$.
  Also, let $\amax(h, w)$ the maximal area that can be realized by an element of $\Wshw$.
  We provide exact formulas of $\amax(h, w)$ for $h \in \{1,2,3,4,5\}$ and $w \in \N^*$.
\end{abstract}


\section{Introduction}

Following a previous result (\cite{blondin2025maximal}) stating that the  number of cells of maximal degree $2$ in  a 2D word with  a rectangle $R$ of dimensions $h \times w$ is  at most $2/3$ the area of $R$ plus a small constant that depends on $h$ and $w$, we now investigate the maximal area of snake polyominoes contained in a rectangle of dimensions $h \times w$.
We show  that this family of polyominoes possess  sufficient  structure  to provide  explicit expressions for their maximal area in terms of $h$ and $w$, for height $h \in \{1, 2, 3, 4, 5\}$.
Our underlying conjecture is that there exists an explicit expression for the maximal area of snake-like polyominoes in any given rectangle of arbitrary dimensions $h\times w$ that can be expressed in terms of $h$ and $w$. 
Due to space restriction, many proofs of the stated propositions and lemmas have been ommitted, but will be included in an upcoming extended version.


\section{2D Words}

For $a,b \in \N$, let $\DI{a,b} = \{n \in \N \mid a \leq n \leq b\}$.
Also, if $a,b,m \in \N$, with $m > 0$, we write $a \equiv_m b$ when $a$ and $b$ are congruent modulo $m$.
The definitions and notations that are used  here for $2$-dimensional words is adapted from~\cite{giammarresi1997two,morita2004two}.

Let $A$ be a finite alphabet and $h,w \in \N$.
A \emph{$2$-dimensional  word, or $2D$-word, $W$ of dimensions $h \times w$ on $A$} is a matrix of $h$ rows and $w$ columns with entries in $A$.
The dimensions $h$ and $w$ are called respectively the \emph{height} and \emph{width} of $W$.
The entry of $W$ at row $i \in \DI{1,h}$ and column $j \in \DI{1,w}$ is denoted by $W[i,j]$.
Also, given $i \in \DI{1,h}$ (resp. $j \in \DI{1,w}$), the $1$-dimensional word, $1D$-word, obtained by taking the $i$-th row (resp. $j$-th column) of $W$ is denoted by $W[i,\_]$ (resp. $W[\_\;,j]$).
The 2D word $W$ is called \emph{empty} if $h = 0$ or $w = 0$.
For each $h \in \N$ (resp. $w \in \N$), there exists a unique empty word of height $h$ (resp. width $w$), denoted by $\ewh$ (resp. $\eww$).
The set of all 2D words of dimensions $h \times w$ (resp. of height $h$, of width $w$)
on $A$ is denoted by $\Whw$ (resp. $\Wh$, $\Ww$).
Given $h,w_1,w_2 \in \N$, the \emph{horizontal concatenation} of $U \in \Whwv{h}{w_1}$ and $V \in \Whwv{h}{w_2}$ is the word $U \hcat V \in \Whwv{h}{(w_1+w_2)}$ defined by
	\[(U \hcat V)[i,j] = \begin{cases}
		U[i,j], & \mbox{if $j \leq w_1;$} \\
		V[i,j - w_1], & \mbox{if $j > w_1.$}
	\end{cases}\]
The \emph{vertical concatenation} $U\vcat V\in \Whwv{(h_1+h_2)}{w}$ of $U \in \Whwv{h_1}{w}$ and $V \in \Whwv{h_2}{w}$ is defined similarly for $h_1,h_2,w \in \N$.
Given $W \in \Whwv{h}{w}$ and $m,n \in \Q$ such that $mh, nw \in \N$, the \emph{$(m,n)$-th power of $W$} is the word $W^{m \times n} \in \Whwv{mh}{nw}$ such that $W^{m \times n}[i,j] = W[(i - 1\bmod h) + 1, (j - 1\bmod w) + 1]$.
One might easily prove that $(\Wh, \hcat, \ewh)$ and $(\Ww, \vcat, \eww)$ are monoids.
We denote by $\tau(W)$ the 2D word obtained from $W$ by transposing its underlying matrix.

Since 2D words can be seen as discrete rectangles filled with letters, we encode for convenience their \emph{north}, \emph{west}, \emph{east} and \emph{south} sides, or directions, with the alphabet $D = \{\n, \w, \e, \s\}$.
Also, for sake of compactness, we write subsets of $D$ as words, without, braces and commas, so that the subset $\{\n,\e,\s\}$ is denoted by $\n\e\s$.
For convenience, we consider two bijective endofunctions on $D$.
The first one is defined by $\rho(\n) = \w$, $\rho(\w) = \s$, $\rho(\e) = \n$ and $\rho(\s) = \e$, corresponding geometrically with a quarter-turn counterclockwise rotation.
The second one is defined by $\tau(\n) = \w$, $\tau(\w) = \n$, $\tau(\e) = \s$ and $\tau(\s) = \e$, corresponding geometrically with a reflection along a diagonal  axis from the north-west corner to the south-east corner.
Let $W, U$ be 2D words. $U$ is a \emph{factor} of $W$ when the rectangle of $U$ is a subrectangle  of the rectangle of $W$. More formally,
we say that $U$ is a \emph{factor} of $W$ if there exist 2D words $R_1$, $R_2$, $R_3$, $C_1$, $C_2$, $C_3$, $W_{11}$, $W_{12}$, $W_{13}$, $W_{21}$, $W_{23}$, $W_{31}$, $W_{32}$ and $W_{33}$ such that
\begin{equation}\label{eq:factor}
  \begin{array}{rcccl}
      &   & R_1   \\
      &   & \vcat \\
    W & = & R_2   & = & C_1 \hcat C_2 \hcat C_3, \\
      &   & \vcat \\
      &   & R_3   \\[3mm]
  \end{array}
  \quad
  \begin{array}{rcl}
    R_1 & = & W_{11} \hcat W_{12} \hcat W_{13} \\
    R_2 & = & W_{21} \hcat U \hcat W_{23} \\
    R_3 & = & W_{31} \hcat W_{32} \hcat W_{33} \\[3mm]
    C_1 & = & W_{11} \vcat W_{21} \vcat W_{31} \\
    C_2 & = & W_{12} \vcat U \vcat W_{32} \\
    C_3 & = & W_{13} \vcat W_{23} \vcat W_{33}
  \end{array}
\end{equation}
Given any $H \subseteq D$ and any $U, W$ such that $U$ is a factor of $W$,
we say that \emph{$U$ is an $H$-factor of $W$} if the condition $\n \in H$ implies $R_1$ is empty, the condition $\w \in H$ implies $C_1$ is empty, the condition $\e \in H$ implies $C_3$ is empty and the condition $\s \in H$ implies $R_3$ is empty.
Roughly speaking,  $U$ is an $H$-factor of $W$ if, for each $d \in H$, $U$ is on the border of $W$ in the direction $d$.

From now on, we fix the alphabet $\AlC = \{\E, \O\}$.
Given $W \in \Whw$, the \emph{graph of $W$}, denoted by $G[W]$, is the subgraph of the grid graph $G_{h \times w}$ induced by $\{(i,j) \mid W[i,j] = \O\}$.
Any ordered pair $(i,j) \in \DI{1,h} \times \DI{1,w}$ is called a \emph{cell of $W$}.
A cell $(i,j)$ is on the \emph{$\n$-boundary (resp. $\w$-boundary, $\e$-boundary, $\s$-boundary)} of $W$ if $i = 1$ (resp. $j = 1$, $j = w$, $i = h$), or simply on the \emph{boundary} of $W$ if it is on the $d$-boundary of $W$, for some $d \in D$.
Otherwise $(i,j)$ is called an \emph{internal} cell.
We say that the cell $(i,j)$ of $W$ is \emph{occupied} if $W[i,j] = \O$ and that it is \emph{empty} if $W[i,j] = \E$. The number of occupied cells in $W$ is denoted $|W|_{\O}$ and is also called the \emph{area} of $W$.
The \emph{neighborhood of $(i,j)$ in $W$} is defined by
\begin{equation}
  N_W(i,j) = \{(i',j') \in \DI{1,h} \times \DI{1,w} \mid (i - i')^2 + (j - j')^2 = 1\},
\end{equation}
while the \emph{degree} of $(i,j)$ in $W$ is $\deg_W(i,j) = \mathrm{Card}\{(i',j') \in N_W(i,j)~:~W[i',j'] = \O\}$.
A connected component $C \subseteq \DI{1,h} \times \DI{1,w}$ of $G[W]$
is called a \emph{component of $W$}.
The number of components of $W$ is denoted by $c(W)$.
We say that $W$ is \emph{$\n$-inscribed (resp. $\w$-inscribed, $\e$-inscribed, $\s$-inscribed)} if $\O$ occurs in $W[1,\_]$ (resp. $W[\_,1]$, $W[\_,w]$, $W[h,\_]$).
Finally, we say that $W$ is \emph{inscribed} if it is $d$-inscribed for all $d \in D$.


\section{Snake Words and Snake Factors}

A \emph{snake word} is an inscribed binary word $W$ on $\AlC$ such that $G[W]$ is a chain.
The set of all snake words of dimensions $h \times w$ on $\AlC$ is denoted by $\Wshw$.
Similarly, a \emph{snake forest word} is a binary word $W$ on $\AlC$ such that $G[W]$ is a forest of chains.
A binary word $W$ is called a \emph{snake factor} if there exists a snake word $W'$ such that $W$ is a factor of $W'$.
The set of all snake factors of dimensions $h \times w$ on $\AlC$ is denoted by $\Wsfhw$.
The following two observations are immediate:
\begin{proposition} ~
  \begin{enumerate}[label=(\roman*),noitemsep]
    \item A snake factor is a snake forest word.
    \item There exist snake forest words that are not snake factors.
  \end{enumerate}
\end{proposition}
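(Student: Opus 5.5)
For (i), let $W$ be a snake factor, so $W$ is a factor of some snake word $W'$ as in~\eqref{eq:factor}. I will show that $G[W]$ is isomorphic to an induced subgraph of $G[W']$. The key step is the observation that the factor decomposition fixes an offset $(r,c)$, namely the height of $R_1$ and the width of $C_1$, with $W[i,j]=W'[i+r,j+c]$ for all cells of $W$. Translating by this offset preserves 4-adjacency, so $G[W]$ is exactly the subgraph of $G[W']$ induced by the occupied cells lying in the subrectangle occupied by $U=W$.

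Since $G[W']$ is a chain, i.e.\ a path, any induced subgraph of it is a disjoint union of paths. Indeed, induced subgraphs of paths are acyclic, and every vertex in them has degree at most $2$. Hence $G[W]$ is a forest of chains, and $W$ is a snake forest word. The only detail requiring care is that the isolated vertices and the empty graph count as (trivial) chains, which is consistent with the definition.

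For (ii), I will exhibit an explicit snake forest word that cannot sit inside any snake word. The idea is to force a cycle or a vertex of degree $3$ in any completion. A first attempt would be a $3\times 3$ word whose occupied cells are the eight boundary cells minus one corner, together with the center empty. But that shape is itself a chain of $7$ cells, so it gives nothing.

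A better candidate is the $3\times3$ word
\[
\begin{array}{ccc}\O&\E&\O\\ \E&\O&\E\\ \O&\E&\O\end{array}
\]
Here the center cell is occupied and all four of its neighbors are empty, so it is isolated inside the factor. Any other occupied cell $x$ of the snake word would have to connect to the center through a path, and that path must pass through a neighbor of the center. All four such neighbors are empty, so no such path exists. The whole word therefore can't be connected, unless the snake consists of the center cell alone. This is impossible, because the snake must also contain the four occupied corners.

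So $G[W']$ is disconnected for every $W'$ having this $W$ as a factor, and $W$ is not a snake factor. Its graph consists of $5$ isolated vertices, hence it is a snake forest word, which proves (ii).

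The main obstacle is simply choosing an example where the separation argument is airtight. It is cleanest to require that some occupied cell of $W$ has no occupied neighbor inside $W$ while lying in the interior of $W$, and that $W$ contains another occupied cell. The argument above does exactly this.
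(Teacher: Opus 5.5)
Your proof is correct. The paper gives no proof and calls both parts immediate. Your argument for (i) is the same ``$G$ is a subgraph of the chain $G'$'' reasoning the paper uses at the start of the proof of Theorem~\ref{thm:snake-factor}. Your $3\times 3$ example for (ii) is an instance of the obstruction used there: the isolated interior center is an unconnectable component, and the paper notes that $u(S)>0$ forces $c(S')>1$.
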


%

A \emph{sided word of dimensions $h \times w$} is a pair $S = (W, H)$, where $W \in \Whw$ and $H \subseteq D$ is called the \emph{hull} of $S$.
Without ambiguity, the  area and  number of components are naturally extended from words to sided words by setting $\area{S} = \area{W}$ and $c(S) = c(W)$.
Similarly, the definitions of \emph{height}, \emph{width}, \emph{cell}, \emph{degree of a cell}, \emph{component}, \emph{snake word}, \emph{snake forest} and \emph{inscription} on 2D words are naturally extended to sided words.

The definitions of \emph{horizontal concatenation}, \emph{vertical concatenation}, \emph{factor} and \emph{snake factor} for sided words, however, need to be adjusted.
Let $S = (W, H)$ and $S' = (W', H')$ be two sided words.
Then the \emph{horizontal concatenation} of $S$ and $S'$ is the partial binary operations $S \hcat S' = (W \hcat W', H \cup H')$ defined whenever $S$ and $S'$ have the same height, $\e \notin H$, $\w \notin H'$ and $H - \{\w\} = H' - \{\e\}$.
The \emph{vertical concatenation} $S \vcat S'$ of $S$ and $S'$ is defined similarly.
We say that $S$ is a \emph{factor} of $S'$ if $H \subseteq H'$ and $W$ is an $H$-factor of $W'$.
Finally, $S = (W, H)$ is called a \emph{snake factor} if there exists an inscribed sided word $S' = (W', D)$ such that $S$ is a factor of $S'$ and $W$ is a snake factor of $W'$.

Let $S = (W, H)$ be a sided word and $(i,j)$ a cell of $S$.
The \emph{number of outside liberties of $(i,j)$ in $S$}, denoted by $\nlout_S(i,j)$, is defined by
\begin{equation}\begin{array}{rcl}
  \nlout_S(i,j)
    & = & \indic{i = 1}\indic{\n \notin H} + \indic{j = 1}\indic{\w \notin H} \\
    & + & \indic{i = h}\indic{\s \notin H} + \indic{j = w}\indic{\e \notin H},
\end{array}\end{equation}
where $\indic{\cdot}$ is the usual indicator function, \emph{i.e.} $\indic{p} = 1$ if $p$ is true, $0$ otherwise.
Similarly, the \emph{number of liberties of $(i,j)$ in $S$}, denoted by $\nl_S(i,j)$, is defined by
\begin{equation}
  \nl_S(i,j) = \max(\min(\nlout_S(i,j), 2 - \deg_S(i, j)), 0).
\end{equation}
It follows from the definition that $\nl_S(i,j) \in \{0, 1, 2\}$ if $(i,j)$ is a boundary cell, while $(i,j)$ is an internal cell implies that $\nl_S(i,j) = 0$. 
Intuitively, the number of liberties of a cell is the maximum number of its edges that can be connected to other cells without creating cells of degree greater than $2$.
Let $C$ be a component of $S$ and define the number of liberties of $C$ as $\nl_S(C) = \sum_{(i,j) \in C} \nl_S(i,j)$.
Then $C$ is called \emph{unconnectable (resp. terminal, crossing)} if $\nl_S(C) = 0$ (resp. $\nl_S(C) = 1$, $\nl_S(C) = 2$).
Moreover, given $d \in D$, we say that \emph{$C$ is connectable on $d$ in $S$} if there exists a $d$-boundary cell $(i,j) \in C$ such that $\nl_S(i,j) > 0$.
The number of unconnectable, crossing, terminal components of $S$ are respectively  denoted  $u(S)$. $x(S)$, $t(S)$. For each $d \in D$, the number of terminal components connectable on $d$ is denoted by $t_d(S)$.
The first main result of this extended abstract is a characterization of sided snake factors:
\begin{theorem}\label{thm:snake-factor}
  Let $S = (W, H)$ be a sided word.
  Then $S$ is a sided snake factor if and only if $S$ is a sided snake forest word, $t(S) \leq 2$, and exactly one of the following conditions holds:
  \begin{enumerate}[label=(\roman*),noitemsep]
    \item $H = D$, $S$ is inscribed and $c(S) = 1$.
    \item $H = D - \{d\}$ for some $d\in D$, $S$ is $\rho^2(d)$-inscribed and $u(S) = 0$.
    \item $|H| \in \{0, 1\}$ or $H = \{d, \rho(d)\}$ for some $d$, and $u(S) = 0$.
    \item $H = D - \{d, \rho^2(d)\}$ for some $d$, $u(S) = 0$, $S$ has at least one crossing component connectable on both $d$ and $\rho^2(d)$, and the condition $t(S) = 2$ implies $t_d(S), t_{\rho^2(d)}(S) \equiv_2 x(S)$.
  \end{enumerate}
\end{theorem}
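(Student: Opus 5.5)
We prove the two directions separately. In each, we split according to the shape of the hull $H$, following the case split of the statement. The recurring idea is that a sided snake factor $S=(W,H)$ is exactly the trace, inside the window $W$, of a single chain living in a larger inscribed word $S'=(W',D)$. Each component of $G[W]$ is a subpath of that chain. Such a subpath either leaves $W$ through an open side at both ends (crossing), at one end (terminal), or never (which forces it to be the whole chain).

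\textbf{Necessity.} Let $S$ be a factor of an inscribed $S'$ whose graph is a chain $P$. First, $G[W]$ is induced in $G[W']$, so it is a forest of paths, and cells of degree $2$ in $W$ stay of degree $\le 2$. Second, each component $C$ is a maximal subpath of $P$ inside the window. Its two ends are either endpoints of $P$ or continue through an edge crossing a side $d\notin H$; such an edge is a liberty counted by $\nl_S$. Hence $\nl_S(C)\ge 1$ unless $C=P$.

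Since $P$ has only two endpoints, at most two components can use an endpoint of $P$, which gives $t(S)\le 2$. Moreover, if some component is unconnectable, then it is all of $P$, so $c(S)=1$. This is only compatible with the other conditions when $H=D$, where in addition $W=W'$ forces inscription, giving (i).

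For $H=D-\{d\}$, the side $\rho^2(d)$ of $W$ coincides with that of $W'$, so $S$ is $\rho^2(d)$-inscribed, and $u(S)=0$ unless $c=1$, giving (ii).

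The delicate case is (iv), where $H=\{\rho(d),\rho^3(d)\}$ and the window is a full-height strip open on two opposite sides. Since $W'$ is inscribed on both $d$ and $\rho^2(d)$ beyond the strip, the chain must traverse the strip. So some component is crossing and connects $d$ to $\rho^2(d)$. For the parity condition, traverse $P$ and record the sequence of sides through which it exits and re-enters the strip. Each crossing component switches sides or returns to the same side. Counting the alternations of the exterior pieces (left region versus right region) gives a parity relation between the numbers of terminal components hooked on $d$, those hooked on $\rho^2(d)$, and $x(S)$. We will formalize this by contracting each exterior excursion to a single vertex labelled $d$ or $\rho^2(d)$ and reading $P$ as a word.

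\textbf{Sufficiency.} This is constructive. Given $S$ satisfying the conditions, we extend $W$ on each open side $d\notin H$ by adding a margin of width about $2c(S)+2$ rows or columns. In this margin we route disjoint paths joining the liberties of the components in a chain-like order. Near the boundary, each liberty is extended by a short perpendicular stub, and the stubs are joined by "corridor" paths in nested layers, so that all new cells have degree $\le2$ and no new adjacencies touch $W$ other than the intended ones. We order the components so that the two terminal components (if $t=2$) become the chain's ends and the crossing components are linked consecutively. Finally, we pad to make the result inscribed.

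In cases (ii) and (iii), the open sides form a connected region around $W$: in (iii) it is an L-shape or more, and in (ii) it is a U-shape. Any ordering of liberties along the boundary can then be realized by nested, non-crossing corridors, because the liberties appear on a single boundary arc.

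\textbf{Main obstacle.} Case (iv) is the main obstacle. There the open region splits into two disconnected parts, the $d$-side and the $\rho^2(d)$-side, and paths can only pass between them through crossing components connectable on both sides. We would first build an alternating sequence: $d$-side excursion, crossing component, $\rho^2(d)$-side excursion, and so on. Same-side crossing components and terminal components are attached inside the appropriate excursions. The parity hypothesis $t_d,t_{\rho^2(d)}\equiv_2 x$ is exactly what lets the sequence start and end at the terminal components on their required sides. The existence of a bridging crossing component guarantees that the sequence can pass from one side to the other at all. Checking planarity of the nested corridors on each side, given the boundary order of liberties, is where most care is needed.
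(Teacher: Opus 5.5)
There is a genuine gap, and you share it with the paper. Under the definitions as written, the statement itself is false, so no refinement of the case analysis can close it.

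\textbf{Necessity: the unconnectable case.} The failing step is your claim that an unconnectable component ``is only compatible with the other conditions when $H=D$.'' An $H$-factor only forces the padding on the sides in $H$ to be empty, so $W'=W$ is allowed for every hull. Take the $2\times 3$ word $W$ with rows $\O\O\O$ and $\O\E\O$, and $H=\{\w,\e,\s\}$.
\begin{itemize}
\item $W$ is an inscribed snake, so $(W,H)$ is a factor of $(W,D)$ and hence a sided snake factor.
\item The top cells have degree $2$ and both endpoints lie on closed sides. So the unique component has $\nl_S(C)=0$, $u(S)=1$, and (ii) fails.
\end{itemize}
Your hedge ``$u(S)=0$ unless $c=1$'' is exactly where (ii) breaks. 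What your argument actually gives is ``$u(S)=0$, or $W$ itself is an inscribed snake word.'' With $H=\{\w,\e\}$ the same word also violates (iv): its only component is crossing but connectable only on $\s$. So ``the chain must traverse the strip'' presupposes non-empty padding on both open sides. The paper's necessity proof makes the same leap (``$u(S)>0$ would give $c(S')>1$''). The statement itself needs amending, not just the proof, e.g.\ by requiring non-empty padding on every side outside $H$.

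\textbf{Parity in case (iv).} Even with such an amendment, this step fails. Only crossing components with one liberty on $d$ and one on $\rho^2(d)$ switch sides. A U-shaped crossing component with both liberties on $d$ raises $x(S)$ without switching. So contracting excursions yields a parity in the number of \emph{switching} crossing components, not in $x(S)$. Concretely, take $H=\{\w,\e\}$ and the $3\times 9$ word with rows $\O\E\O\E\O\E\O\E\E$, $\E\E\O\E\O\O\O\E\E$, $\E\E\O\E\E\E\E\E\O$. It has terminals $(1,1)$ on $\n$ and $(3,9)$ on $\s$, a bar in column $3$, and a U in columns $5$--$7$ opening north. It is a sided snake factor with non-empty padding above and below. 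Run the chain from $(1,1)$ over the top into the U at column $7$, out at column $5$, over to the top of the bar, down the bar, and along the bottom to $(3,9)$. The arc from column $5$ to column $3$ nests under the arc from column $1$ to column $7$. Yet $t_{\n}(S)=1$ while $x(S)=2$.

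\textbf{Sufficiency.} Here $\nl_S(C)=2$ does not mean $C$ can be entered and left. Two liberties on adjacent cells of one open side cannot both receive stubs without creating a $2\times 2$ block. Take the $2\times 6$ word with rows $\O\E\O\O\E\O$ and $\O\E\E\E\E\E$, with $H=\{\w,\e,\s\}$. It satisfies (ii): its components are a terminal, a crossing domino and a terminal, $u(S)=0$, and it is $\s$-inscribed. But both chain ends must lie in the two terminal components. So the domino would have to use both cells directly above it, closing a $4$-cycle. Hence no corridor construction can prove $(\Leftarrow)$ as stated.

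Your endpoint-counting proof of $t(S)\le 2$ is sound, and it avoids relying on Lemma~\ref{lem:snake-obs}.
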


In order to prove Theorem~\ref{thm:snake-factor}, we need the following lemmas:
\begin{lemma}\label{lem:snake-obs}
  Let $S = (W,H)$ be a sided snake factor and $C$ a terminal component of $S$.
  Then exactly one of the following two conditions holds:
  \begin{enumerate}[label=(\arabic*), noitemsep]
    \item $S$ has a unique internal cell $(i,j) \in C$ such that $\deg_S(i,j) = 1$.
    \item $S$ has exactly one boundary cell $(i,j) \in C$ such that $\deg_S(i,j) = 0$ and $\nlout_S(i,j) = 1$.
  \end{enumerate}
  The unique cell satisfying  Condition (1) or (2) is called the \emph{terminal} cell of $C$.
\end{lemma}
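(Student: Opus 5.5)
Since $S$ is a sided snake factor, it is a snake forest word, so each component $C$ induces a chain in $G[W]$. Either $C$ is a single cell (a chain with one vertex, degree $0$), or $C$ is a path with exactly two endpoints of degree $1$ and all other cells of degree $2$. Cells of degree $2$ have $\nl_S = 0$. Internal cells have $\nl_S = 0$ by definition, since $\nlout_S = 0$. So the liberties of $C$ come only from boundary cells of degree $0$ or $1$, and $\nl_S(C)=1$ means exactly one such cell contributes exactly one liberty. The plan is to split on $|C|$ and use that $\nl_S(C)=1$.

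\emph{Case $|C|=1$.} Let $C=\{(i,j)\}$ with $\deg_S(i,j)=0$. Then $\nl_S(i,j)=\min(\nlout_S(i,j),2)=1$, so $\nlout_S(i,j)=1$ and the cell lies on the boundary. This is Condition (2). Condition (1) fails because $C$ has no cell of degree $1$. Uniqueness in (2) is trivial.

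\emph{Case $|C|\ge 2$.} Let $p,q$ be the two endpoints of $C$; both have degree $1$, and no cell of $C$ has degree $0$, so Condition (2) fails. For an endpoint $p$ we have $\nl_S(p)=\min(\nlout_S(p),1)$. This equals $1$ exactly when $p$ is a boundary cell with $\nlout_S(p)\ge 1$, and $0$ otherwise. Since $\nl_S(C)=\nl_S(p)+\nl_S(q)=1$, exactly one endpoint, say $p$, has a liberty and the other, $q$, has none.

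The goal is then to show that $q$ is internal. This is the main obstacle, because $q$ could a priori be a boundary cell whose boundary sides all lie in $H$, so that $\nlout_S(q)=0$. Condition (1) literally asks for a unique internal cell of degree $1$, so one must either exclude such a $q$ or read "internal" relative to the hull: a cell is internal in $S$ when it has no outside liberties, i.e. all its boundary sides are in $H$. I would adopt the latter reading, which is consistent with the remark after the definition of $\nl_S$ and with how sides in $H$ are treated as truly enclosed in the definition of a factor.

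With that reading the argument goes as follows:
\begin{itemize}
\item $q$ is an internal cell of degree $1$.
\item $p$ is not internal, since $\nlout_S(p)\ge1$.
\item No other cell of $C$ has degree $1$.
\end{itemize}
Hence $q$ is the unique such cell and Condition (1) holds. Exclusivity of (1) and (2) follows from the degree count: (1) requires $|C|\ge2$ and (2) requires $|C|=1$.

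If the literal geometric meaning of "internal" is intended, I would instead argue as follows. Suppose $q$ lies on a $d$-boundary with $d\in H$. Embed $S$ as an $H$-factor of an inscribed snake $S'$; then $q$ is also on the $d$-boundary of $S'$. In $S'$, the chain $C$ extends through $p$ to the rest of the snake, and $q$ is an endpoint of $C$ inside $S$. Either $q$ is an endpoint of the whole snake, or $q$ continues through a side of $S$ not in $H$, which contradicts $\nlout_S(q)=0$ together with $\deg=1$. The first alternative is allowed, which suggests the hull-relative reading is the right one, and I would state that convention explicitly at the start of the proof.
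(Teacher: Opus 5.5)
The paper states this lemma without proof; it is among those omitted for space, so there is no argument to compare against. Your proof is correct and is the natural one. In a chain component only the endpoints, or the single cell, can carry liberties. So $\nl_S(C)=1$ forces one of two situations: a singleton with $\nlout_S=1$, which gives (2), or a path with exactly one endpoint having an outside liberty, which gives (1). Exclusivity follows from the degrees. You only use that components are chains, so the argument in fact works for every sided snake forest word.

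Your hull-relative reading of ``internal'' (meaning $\nlout_S(i,j)=0$) is not just a convenience; it is necessary. You should state this outright instead of ending on ``suggests''.

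\emph{Counterexample to the literal reading.} Take $W=\O\O$ of dimensions $1\times 2$ with $H=\n\w\s$. This is a sided snake factor, for instance of $(\O\O\O,D)$. Its only component $C$ is terminal: $(1,2)$ has degree $1$ and $\nlout_S(1,2)=1$, while $\nlout_S(1,1)=0$. But a $1\times 2$ word has no internal cell, and no cell of $C$ has degree $0$, so neither (1) nor (2) holds.

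\emph{Why your reading is the right one.} It is exactly what the proof of Theorem~\ref{thm:snake-factor} uses. A cell with $\nlout_S=0$ has the same neighbourhood in $S'$ as in $S$: every side of $S$ it touches lies in $H$, hence is a side of $S'$. So it keeps degree $1$ in $G'$. A degree-$0$ cell with one outside liberty has degree at most $1$ in $G'$. Citing this would justify your convention better than your last paragraph.

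That last paragraph also has a small inaccuracy. It claims that $C$ ``extends through $p$'' in $S'$. This need not hold, since $S'$ may coincide with $S$ on the open side; for example, $S'=(W,D)$ in the example above.
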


The following observation is well-known in graph theory:
\begin{lemma}\label{lem:snake-chain}
  Let $S = (W,H)$ be a sided snake word and $C$ the only component of $S$.
  Then for any pair of cells $c$ and $c'$ of $C$, there exists a unique chain between $c$ and $c'$.
\end{lemma}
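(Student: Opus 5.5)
The plan is to identify $G[W]$ with a graph. By definition, a sided snake word has the property that $G[W]$ is a chain, meaning a path graph; here $C$ is its unique component, so $C$ is the whole vertex set of $G[W]$. I will read "a chain between $c$ and $c'$" as a simple path in $G[W]$ whose endpoints are $c$ and $c'$. The statement then reduces to the classical fact that a tree has a unique path between any two vertices, since a path graph is a tree.

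\textbf{Existence.} Write the cells of $C$ in chain order as $c_1, c_2, \dots, c_n$, so that $c_k$ and $c_{k+1}$ are 4-adjacent for each $k$ and no other pairs are adjacent. This ordering exists because $G[W]$ is a path. Let $c = c_p$ and $c' = c_q$ with $p \le q$, swapping the two cells if necessary. Then the sequence $c_p, c_{p+1}, \dots, c_q$ is a chain in $G[W]$ from $c$ to $c'$. The degenerate case $c = c'$ gives the trivial chain consisting of that single cell.

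\textbf{Uniqueness.} Suppose there are two distinct chains $P$ and $P'$ from $c$ to $c'$. Follow them from $c$ until they first differ, and then until they next meet again. The two portions between these meeting points form a cycle in $G[W]$. This contradicts the fact that $G[W]$ is a path, which is acyclic.

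As a more direct alternative, I can use the degree constraints. Every cell of $C$ has degree at most $2$ in $G[W]$. Hence a simple path leaving $c_p$ toward $c'$ has only one possible direction, namely increasing indices, and at each step it is forced to go to the next index. So the chain is determined step by step.

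The only point needing care is the translation between the definitions: making sure that "chain" in the paper means a path graph, so that induced adjacency coincides with the chain order. Once that identification is made, the argument is routine and I expect no real obstacle.
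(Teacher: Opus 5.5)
Your proposal is correct and matches the paper, which gives no argument at all and simply cites this as a well-known fact of graph theory: $G[W]$ is a chain, hence a tree, so paths between two vertices exist and are unique. Your existence argument via the chain order and your uniqueness argument via acyclicity are the standard proof of that fact.
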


We are now ready to prove theorem \ref{thm:snake-factor}.

\begin{proof}[Proof of Theorem~\ref{thm:snake-factor}]
  $(\Rightarrow)$ Assume that $S = (W,H)$ is a sided snake factor.
  Then there exists an inscribed sided snake word $S' = (W',D)$
  such that $S$ is a factor of $S'$.
  Let $G = G[W]$ and $G' = G[W']$.
  First, since $G$ is a subgraph of $G'$, we conclude that $G$ is a snake forest, so that $S$ is a sided snake forest word.
  Next, we cannot have $t(S) \geq 3$: if it was the case, by Lemma \ref{lem:snake-obs}, $G'$ would have at least $3$ vertices of degree $\leq 1$, contradicting the fact that $G'$ is a chain.
  Hence, $t(S) \leq 2$.
  To conclude this part, we consider four cases.

  \textit{Case $H = D$}.
  Since $S$ is a factor of $S'$, by definition of $H$-factor, we conclude that $W = W'$, which means that $S = S'$ is a snake word.
  Hence, $c(S) = 1$.

  \textit{Case $H = D - \{d\}$}.
  Then $S$ is $\rho^2(d)$-inscribed: if it was not the case, then $S'$ would not be inscribed.
  Moreover, we cannot have $u(S) > 0$: if it was the case, then we would have $c(S') > 1$.

  \textit{Case $|H| \in \{0, 1\}$ or $H = \{d, \rho(d)\}$}.
  As for the preceding case, the assumption $u(S) > 0$ implies $c(S') > 1$, a contradiction.

  \textit{Case $H = D - \{d, \rho^2(d)\}$}.
  The fact that $u(S) = 0$ follows from an argument similar to those of the two previous cases.
  Also, $S$ must have at least one crossing component connectable on both $d$ and $\rho^2(d)$, otherwise, $S'$ would have at least two components, on each open sides of $S$.
  Finally, assume that $t(S) = 2$ and let $t = (t_d(S), t_{\rho^2(d)}(S))$, which implies $t \in \{(0, 2), (1, 1), (2, 0)\}$.
  If $t = (1, 1)$, then $x(S)$ must be odd.
  Indeed, by Lemma~\ref{lem:snake-chain}, there exists a unique chain between the two terminal cells of the only component of $S'$.
  Since the terminal components of $S$ are connected on opposite sides of $S$
  , they must cross $S$ an odd number of times.
  Using a similar argument, if $t \in \{(0, 2), (2, 0)\}$, then $x(S)$ must be even.
  Since all cases have been covered, the result follows.

  $(\Leftarrow)$ Due to space restriction, we only provide an idea of the proof.
  We need to prove that there exists an inscribed sided word $S' = (W',D)$ such that $S$ is a factor of $S'$ and $W$ is a snake factor of $W'$.
  There are four cases to consider, according to which condition between (i) and (iv) holds.

  \textit{Case (i)}. If suffices to take $S' = S$.

  \textit{Case (ii)}. If suffices to connect all components of $S$ by extending it on its only open side, by making sure that the result is $d$-inscribed, for all $d \in D - \{\rho^2(d)\}$.

  \textit{Case (iii)}. Similarly to case (ii), il suffices to connect all components of $S$ and by making sure that the result is $d$-inscribed in all directions.

  \textit{Case (iv)}. In that case, the inscription is easily obtained in all directions.
  The connectivity is guaranteed by the conditions on the components: the existence of at least one crossing component allows to connect each side, and if there are two terminal components, then they can be connected consistently with the crossing components, thanks to their equivalence modulo $2$.
\end{proof}


\section{Generation of Sided Snake Factors}

Let $\SsfhwaH$ be the set of all sided snake factors of dimensions $h \times w$, of area $a$ and of hull $H$.
We were able to use Algorithm~\ref{algo:bsf-generation} to generate $\SsfhwaH$ for small dimensions $h \times w$.

\begin{algorithm}[t]
  \caption{Generation of sided snake factors}\label{algo:bsf-generation}
  \begin{algorithmic}
    \Function{BSF}{$h, w, a$ : positive integers, $H$ : hull}
      \If{$h > w$}
        \State \Return $\tau(\Call{BSF}{w, h, a, \tau(H)})$
      \ElsIf{$(h, w) = (1, 1)$}
        \If{$a = 1$}
          \State \Return $\{(\O,H)\}$
        \ElsIf{$a = 0$ and $|H| \neq 3$ and $H \notin \{\n\s,\w\e\}$}
          \State \Return $\{(\E,H)\}$
        \Else
          \State \Return $\emptyset$
        \EndIf
      \Else
        \State $(w_1, w_2) \gets (\lfloor w / 2 \rfloor, \lceil w / 2 \rceil)$
        \State $(H_1,H_2) \gets (H - \{\e\}, H - \{\w\})$
        \State $S \gets \emptyset$
        \For{$a_1 \in \{0,1,\ldots,a\}$}
          \State $a_2 \gets a - a_1$
          \State $L \gets \Call{BSF}{h, w_1, a_1, H_1}$
          \State $R \gets \Call{BSF}{h, w_2, a_2, H_2}$
          \State $S \gets S \cup \{\ell \hcat r \mid \ell \in L, r \in R \mbox{ and $\ell \hcat r$ is a snake factor}\}$
        \EndFor
        \State \Return $S$
      \EndIf
    \EndFunction
  \end{algorithmic}
\end{algorithm}

We need additional definitions and notation.
First, let $S = (W, H) \in \SsfhwH$.
We denote by $|S|_{\E}$ the $4$-tuple $(n, w, e, s)$ indicating, for each side $d$, the number of empty cells of $S$ located on the side $d$ boundary of $S$ if $d \in H$, and we write $\_$ otherwise.
Similarly, we denote by $|S|_{\sat}$ the $4$-tuple $(n, w, e, s)$ indicating, for each side $d$, the number of cells of $S$, whose degree in $S$ is $2$, located on the side $d$ boundary of $S$ if $d \in H$, and we write $\_$, otherwise.
For any $h, w \in \N^*$ and $H \subseteq D$, let $\SsfhwH$ be the set of all sided snake factors of dimensions $h \times w$ and of hull $H$, whatever their area, and let $\amax(h, w, H)$ be the maximal area that can be realized by a sided snake factor of dimensions $h \times w$ with hull $H$, i.e. $\amax(h,w,H) = \max\{\area{S} : S \in \SsfhwH\}$.
The following observation is immediate:

\begin{lemma}\label{lem:amax-convexity}
  Let $h,w,w_1,w_2 \in \N^*$, with $w = w_1 + w_2$, and $H \subseteq D$.
  Then
  \begin{equation}
    \amax(h,w,H) \leq \amax(h,w_1,H - \{\e\}) + \amax(h,w_2,H - \{\w\}).
  \end{equation}
\end{lemma}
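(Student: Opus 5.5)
The plan is to take a maximizer and split it vertically, then check that each half is a sided snake factor with the right hull. Let $S = (W, H) \in \SsfhwH$ realize the maximum, so that $\area{S} = \amax(h,w,H)$. We write $W = W_1 \hcat W_2$ with $W_1 \in \Whwv{h}{w_1}$ and $W_2 \in \Whwv{h}{w_2}$, and set
\[
  S_1 = (W_1, H - \{\e\}), \qquad S_2 = (W_2, H - \{\w\}).
\]
Then $\area{S} = \area{S_1} + \area{S_2}$. Once we know $S_1 \in \SsfhwHv{h}{w_1}{H - \{\e\}}$ and $S_2 \in \SsfhwHv{h}{w_2}{H - \{\w\}}$, the definition of $\amax$ gives $\area{S_1} \leq \amax(h,w_1,H-\{\e\})$ and $\area{S_2} \leq \amax(h,w_2,H-\{\w\})$, and adding these yields the inequality.

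The real content is therefore the claim that a factor of a sided snake factor, with a suitably smaller hull, is again a sided snake factor. This is a transitivity argument for factors. Since $S$ is a sided snake factor, there is an inscribed sided snake word $S' = (W', D)$ such that $S$ is a factor of $S'$, meaning $H \subseteq D$ and $W$ is an $H$-factor of $W'$. Consider $S_1$. Its hull satisfies $H - \{\e\} \subseteq D$.

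Next we check that $W_1$ is an $(H - \{\e\})$-factor of $W'$. We compose the decomposition of $W$ inside $W'$ given by equation~\eqref{eq:factor} with the decomposition $W = W_1 \hcat W_2$. This amounts to taking the block $C_3$ for $W_1$ to be $W_2$ concatenated horizontally with the old $C_3$, and merging the row blocks accordingly.

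For each direction $d \in H - \{\e\}$ we must check that the corresponding border block is empty.
\begin{itemize}
\item For $d \in \{\n, \s\}$, the row blocks $R_1$, $R_3$ of $W_1$ have the same heights as those of $W$ in $W'$. They are empty because $d \in H$.
\item For $d = \w$, the column block $C_1$ is unchanged, since $W_1$ is the west part of $W$. It is empty because $\w \in H$.
\end{itemize}
The direction $\e$ has been removed from the hull, and that removal is exactly what the enlarged $C_3$ forces. Finally, $W_1$ is a snake factor of $W'$ simply because it is a factor of $W'$ and $G[W']$ is a chain. So $S_1$ is a sided snake factor. The argument for $S_2$ is symmetric, with $\w$ in place of $\e$ and the block $C_1$ enlarged by $W_1$.

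The only step needing care is the bookkeeping of the fourteen blocks in equation~\eqref{eq:factor} when composing the two factorizations. The rows are shared, so it is just a matter of concatenating the column blocks, and I expect this to be routine rather than a genuine obstacle.
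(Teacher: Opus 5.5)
Your proof is correct and matches the paper. The paper calls this observation immediate and writes out no proof; the argument it has in mind is yours: a maximizer splits as $S = S_1 \hcat S_2$ with hulls $H - \{\e\}$ and $H - \{\w\}$ (the same split used in Algorithm~1), and each half is an appropriately hulled factor of the same inscribed snake word $S'$. The only point you leave implicit is that the maximizer exists. It does, because $\SsfhwH$ is finite and nonempty: any snake word of dimensions $h \times w$ (e.g.\ an L-shaped path) paired with hull $H$ belongs to it.
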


Algorithm~\ref{algo:bsf-generation} can be helpful in establishing exact values or upper bounds of $\amax(h, w, H)$.
More precisely:

\begin{lemma}\label{lem:amax}
  Let $h,w,a \in \N^*$ and $H \subseteq D$, where $h + w - 1 \leq a \leq hw$.
  If $\SsfhwaHv{h}{w}{a}{H} \neq \emptyset$ and $\SsfhwaHv{h}{w}{a + 1}{H} = \emptyset$, then $\amax(h, w, H) = a$.
\end{lemma}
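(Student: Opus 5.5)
The proof combines a trivial lower bound with a ``discrete intermediate value'' property of the areas realized in $\SsfhwH$. The lower bound is immediate: since $\SsfhwaHv{h}{w}{a}{H} \neq \emptyset$, we have $\amax(h,w,H) \geq a$. For the upper bound, suppose $\amax(h,w,H) = b \geq a+2$. I would show that the set of areas realized by elements of $\SsfhwH$ contains every integer between $h+w-1$ and $b$. Then $a+1$ is realized, which contradicts $\SsfhwaHv{h}{w}{a+1}{H} = \emptyset$. Everything reduces to the following claim: if $S \in \SsfhwH$ has area $b' > h+w-1$ (with $b' \geq a+2$), then some $T \in \SsfhwH$ has area exactly $b'-1$.

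To prove the claim, fix a witness $S' = (W', D)$ for $S$. By the definition of a sided snake factor, $S$ is an $H$-factor of $S'$ and $G[W']$ is a chain. I will use that the witness is flexible. On every side $d \notin H$, the part of $S'$ outside $S$ can be redrawn: long detours can be added in the unused region beyond that side, and these detours can restore the inscription of $S'$ on its outer sides without touching $S$. So $\area{S}$ can only change through the cells of $S$ itself. The constraints to preserve are: the graph remains one chain, and for each $d \in H$ the word $S$ keeps an occupied cell on its $d$-boundary, since that boundary is also the $d$-boundary of $S'$.

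The construction shortens the chain $G[W']$ from one end. Starting at an endpoint $p$ of $G[W']$, delete cells one at a time along the chain. A deleted cell outside $S$ leaves $\area{S}$ unchanged, and the outer inscription is repaired by the detours above. Stop at the first deletion of a cell of $S$. This drops the area by exactly $1$ and leaves a chain. The deletion is admissible unless that cell is the unique occupied cell of $S$ on some $d$-boundary with $d \in H$. If it is blocked, run the same procedure from the other endpoint. If both ends are blocked, look for a cell of $S$ that is on no such critical boundary and whose removal can be bypassed by rerouting the chain through empty cells of $S'$ outside $S$.

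The hardest step is the case where neither end gives an admissible deletion. There I would argue by counting that $S$ is ``minimal''. Every cell of $S$ on the chain between the two blocking cells is then forced, either because it is needed for connectivity inside $S$ or because it is the only representative of a closed side. Such a minimal configuration is essentially a monotone staircase joining the critical boundaries, so its area is at most $h+w-1$. This contradicts $b' > h+w-1$, and it is exactly where the hypothesis $h+w-1 \leq a$ in the statement is used. I expect this step to need case analysis according to the shape of $H$, mirroring conditions (i)--(iv) of Theorem~\ref{thm:snake-factor}. In case (iv) one must also check that deleting a cell does not destroy the last crossing component connectable on both $d$ and $\rho^2(d)$, and does not break the parity condition on $t_d(S)$, $t_{\rho^2(d)}(S)$ and $x(S)$. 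Theorem~\ref{thm:snake-factor} then certifies that the shortened word $T$ is again a sided snake factor of hull $H$.
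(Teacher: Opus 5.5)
The paper gives no proof of this lemma; it is stated as a by-product of Algorithm~\ref{algo:bsf-generation}. Your proposal therefore has to stand on its own, and it has a genuine gap.

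Your reduction is sound. It suffices to show that every element of $\SsfhwH$ of area $b' > h+w-1$ can be traded for one of area $b'-1$, and deleting an end cell handles configurations where an end is free. The gap is the case where both ends are pinned. There you claim every cell is ``forced'' and $S$ is a monotone staircase of area at most $h+w-1$. This is false. Every interior cell of a chain is needed for connectivity however long the chain is, so pinning both ends gives no bound on the area.

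Concretely, take $H = D$ and $h \times w = 5 \times 3$, with occupied cells $(1,1),(2,1),(2,2),(2,3),(3,3),(4,3),(4,2),(4,1),(5,1)$.
\begin{itemize}
\item This is a snake word of area $9 > 7 = h+w-1$.
\item Its endpoints $(1,1)$ and $(5,1)$ are the only occupied cells of rows $1$ and $5$, so deleting either breaks inscription.
\item Deleting any other cell disconnects the chain.
\item Since $H = D$ forces $S' = S$, there are no cells outside $S$ through which to reroute.
\end{itemize}
None of your moves applies, yet the contradiction you aim for never arises. Area $8$ is still realizable, for instance by replacing the tail $(4,1),(5,1)$ with $(5,2)$. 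Reaching it, however, requires deleting two cells and adding one, which is not among your moves.

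\textbf{What is missing.} You need a genuine surgery lemma. When both ends are pinned, you must produce a modification that changes the area by exactly $-1$ (for instance a corner cut next to a pinned endpoint as above, or a move elsewhere along the chain). It must preserve the chain, the required inscriptions, and the conditions of Theorem~\ref{thm:snake-factor} for the hull $H$. You must also show such a move always exists once the area exceeds $h+w-1$; this is where the real case analysis lies.

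\textbf{Secondary issues.}
\begin{itemize}
\item Your constraint ``for each $d\in H$, $S$ keeps an occupied cell on its $d$-boundary'' is misstated. The $d$-boundary of $S$ is the whole $d$-boundary of $S'$ only when both sides adjacent to $d$ are also in $H$. That is why Theorem~\ref{thm:snake-factor}(ii) requires only $\rho^2(d)$-inscription and (iii), (iv) require none. Your notion of ``blocked'' is therefore stricter than necessary.
\item The informal ``detours'' restoring the outer inscription are better replaced by a direct appeal to Theorem~\ref{thm:snake-factor}. A detour must be attached at an endpoint of the chain. Once the remaining endpoints lie inside $S$, or the outside region is a fixed-height half-strip already cut by other excursions, there need be no room to attach one.
\end{itemize}
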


As a consequence, we obtain the following observations for $h = 1, 2, 3$.

\begin{lemma}\label{lem:amax-123}
  Let $w \in \N^*$ and $H \subseteq D$.
  Then the following identities hold:
  \begin{enumerate}[label=(\roman*),noitemsep]
    \item $\amax(1, 1, H) = 1$;
    \item $\amax(2, 1, H) = 2$,
          $\amax(2, 2, H) = 3$;
    \item $\amax(2, w, \n\s) \leq 3w/2$ if $w \bmod 2 = 0$;
    \item $\amax(3, 1, H) = 3$,
          $\amax(3, 2, \n\w\s) = \amax(3, 2, \n\e\s) = 5$,
          $\amax(3, 2, \n\s) = 4$;
    \item $\amax(3, w, \n\s) \leq 4w/2$ if $w \bmod 2 = 0$;
  \end{enumerate}
\end{lemma}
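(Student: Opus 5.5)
The plan is to treat the small-width identities, namely (i), (ii) (first two values) and (iv), as finite verifications. Lemma~\ref{lem:amax} will be combined with either the output of Algorithm~\ref{algo:bsf-generation} or a hand argument. The two ``even width'' bounds (iii) and (v) will then follow from these base values by repeated use of Lemma~\ref{lem:amax-convexity}.

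For the base cases, the upper bounds come from trivial counting, sharpened by the snake-factor conditions of Theorem~\ref{thm:snake-factor}.

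For $(1,1)$, the full cell $(\O,H)$ is always a snake factor: it is inscribed and has $c=1$ when $H=D$, and has $u=0$ with one terminal or crossing component otherwise. Hence $\amax=1$.

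For $h\le 3$ and $w=1$, the full column is a chain and passes the theorem's conditions for every $H$ (a crossing component touching both open sides in case (iv)). Hence $\amax(h,1,H)=h$.

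For $(2,2)$, four occupied cells would form a $4$-cycle, which is not a forest. Three cells forming an L are a chain that is inscribed and connectable wherever needed. Hence $\amax(2,2,H)=3$.

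For $(3,2)$, six cells would contain a cycle. For five cells, any choice of the empty cell other than a middle one leaves a $2\times2$ square and hence a cycle. So the only candidates empty the middle-left or middle-right cell, giving a C-shaped chain. This chain is open only at its two ends, which lie on the side opposite the hole. With hull $\n\s$ (case (iv) with $d=\n$), we need a crossing component connectable on both $\w$ and $\e$. The C-shape has both endpoints on one side, so it fails. With hull $\n\w\s$, only the $\e$ side is open, and a C opening to the east satisfies case (ii) with $u=0$. Symmetrically, $\n\e\s$ is realised. For $\n\s$ with four cells, an S- or Z-shaped path from the west column to the east column gives a crossing component, which attains $4$. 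These are exactly the checks that Algorithm~\ref{algo:bsf-generation} performs, so I would state that Lemma~\ref{lem:amax} applied to the generated sets confirms them.

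For (iii), write $w=2k$ and split the rectangle into $k$ blocks of width $2$ by iterating Lemma~\ref{lem:amax-convexity}. The hull $\n\s$ loses only $\w,\e$, so each block has hull $\n\s$:
\[\amax(2,2k,\n\s)\le k\,\amax(2,2,\n\s)=3k=3w/2.\]
Similarly, for (v),
\[\amax(3,2k,\n\s)\le k\,\amax(3,2,\n\s)=4k=4w/2.\]

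The main obstacle is to make the case analysis for $(3,2)$ with the two hulls airtight, in particular to confirm that no five-cell configuration with hull $\n\s$ has a crossing component. I would settle this by the short enumeration above: only two shapes have five cells, and both have their liberties on a single side.
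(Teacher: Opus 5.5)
Your proposal is correct and follows the paper's own route: (i), (ii), (iv) are settled by direct inspection together with Algorithm~\ref{algo:bsf-generation} and Lemma~\ref{lem:amax}, and (iii), (v) follow by iterating Lemma~\ref{lem:amax-convexity} on blocks of width $2$, since both $H-\{\e\}$ and $H-\{\w\}$ equal $\n\s$ when $H=\n\s$. There are two harmless slips in your $3\times 2$ analysis: hull $\n\s$ is case (iv) of Theorem~\ref{thm:snake-factor} with $d=\w$ (taking $d=\n$ gives $H=\w\e$), and the two ends of each C-shaped chain lie on the same side as the hole, not on the opposite side.
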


\begin{proof}
  (i), (ii) and (iv) follow from direct observation,  Algorithm~\ref{algo:bsf-generation} and  Lemma~\ref{lem:amax}.

  (iii) follows from (ii),  Lemma~\ref{lem:amax-convexity} and from induction on $w / 2$.

  (v) follows from (iv),  Lemma~\ref{lem:amax-convexity} and from induction on $w / 2$.
\end{proof}

The cases $h = 4$ and $h = 5$ show more complex periodicity patterns:

\begin{lemma}\label{lem:amax-4}
  %
  For any $w \in \N^*$,
  \[
    \amax(4, w, \n\w\s) \leq \begin{cases}
      (8w + 4)/3, & \mbox{if $w = 1$;} \\
      (8w + 3)/3, & \mbox{if $w \bmod 3 = 0$;} \\
      (8w + 1)/3, & \mbox{if $w \bmod 3 = 1$;} \\
      (8w + 2)/3, & \mbox{if $w \bmod 3 = 2$.}
    \end{cases}
    \enskip
    \amax(5, w, \n\w\s) \leq \begin{cases}
      (22w + 13)/7, & \mbox{if $w \in \{1,8\}$;} \\
      (22w + 15)/7, & \mbox{if $w = 6$;} \\
      (22w + 14)/7, & \mbox{if $w = 7$;} \\
      (22w + 7)/7, & \mbox{if $w \bmod 7 = 0$;} \\
      (22w + 6)/7, & \mbox{if $w \bmod 7 = 1$;} \\
      (22w + 12)/7, & \mbox{if $w \bmod 7 = 2$;} \\
      (22w + 11)/7, & \mbox{if $w \bmod 7 = 3$;} \\
      (22w + 10)/7, & \mbox{if $w \bmod 7 = 4$;} \\
      (22w + 9)/7, & \mbox{if $w \bmod 7 = 5$;} \\
      (22w + 8)/7, & \mbox{if $w \bmod 7 = 6$.}
    \end{cases}
  \]
\end{lemma}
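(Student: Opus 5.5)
The approach mirrors the proof of Lemma~\ref{lem:amax-123}(iii) and (v). We combine a finite number of exact values obtained by computer with the subadditivity of Lemma~\ref{lem:amax-convexity}, and then induct on $w$ with step equal to the period ($3$ for $h=4$, $7$ for $h=5$).

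\textbf{Base values.} For $h=4$ we run Algorithm~\ref{algo:bsf-generation}, together with Lemma~\ref{lem:amax}, to compute $\amax(4,w,\n\w\s)$ for $w\le 4$, say. We also compute the values of the \emph{open} blocks $\amax(4,3,\n\s)=8$ and $\amax(4,w',\n\s)$ for a few small $w'$. In these we simply check that the algorithm finds a word of area $a$ and none of area $a+1$. Similarly, for $h=5$ we compute $\amax(5,w,\n\w\s)$ for $w\le 8$ or so, covering every exceptional value $w\in\{1,6,7,8\}$ and one representative of each residue class mod $7$, and we compute $\amax(5,7,\n\s)\le 22$. The periodic block is a factor whose east and west sides are both open. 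Its bound must be exactly the slope times the period, namely $8$ per $3$ columns and $22$ per $7$ columns; otherwise the induction cannot keep the additive constant fixed.

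\textbf{Inductive step.} For $w$ beyond the base range, write $w=(w-p)+p$ with $p=3$ (resp.\ $p=7$). Lemma~\ref{lem:amax-convexity}, with $H=\n\w\s$, gives
\[\amax(h,w,\n\w\s)\le \amax(h,w-p,\n\w\s)+\amax(h,p,\n\s).\]
Splitting off the block on the east side keeps the west side closed in the remaining factor, since $H-\{\e\}=\n\w\s$, while the block receives hull $H-\{\w\}=\n\s$. The induction hypothesis gives $(8(w-3)+c)/3+8=(8w+c)/3$, and similarly $(22(w-7)+c)/7+22=(22w+c)/7$. The residue of $w$, and hence the constant $c$, is therefore preserved. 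The exceptional small cases are simply not used as induction bases for larger $w$ when their constant is worse. For instance, for $h=5$ the residue classes $6,0,1$ mod $7$ must start from $w=13,14,15$ rather than from $6,7,8$, so those widths need to be included among the base values. Alternatively, these classes can be started by decomposing into other block widths whose sums give the stated constants.

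\textbf{Main obstacle.} The hard part is the base computation for $h=5$. We need the open block $\amax(5,7,\n\s)$ and half-open widths up to about $15$. These are too large for naive enumeration, so we rely on the recursive halving of Algorithm~\ref{algo:bsf-generation} with pruning by Theorem~\ref{thm:snake-factor}. If that is too expensive, I would first try to bound the large bases by splitting them further with Lemma~\ref{lem:amax-convexity} into smaller computed pieces, e.g.\ $\amax(5,w,\n\w\s)\le\amax(5,w_1,\n\w\s)+\amax(5,w_2,\n\s)$ with small $w_2$ whose values are tabulated, and check by arithmetic that the sum matches each stated constant. The point to verify is that some choice of splits attains the exact constant in every residue class.
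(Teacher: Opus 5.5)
\paragraph{Gap: the open periodic blocks beat the slope.} Your induction needs $\amax(4,3,\n\s)=8$ and $\amax(5,7,\n\s)\le 22$. Both values are false, because a block open on both $\w$ and $\e$ gains from edge effects.

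In the $4\times 3$ block, fill columns $1$ and $3$ entirely together with the cell $(4,2)$. This U-shape is a single chain from $(1,1)$ to $(1,3)$. It is therefore a crossing component connectable on both $\w$ and $\e$, hence a sided snake factor with hull $\n\s$ by Theorem~\ref{thm:snake-factor}(iv). Concretely, it is the middle of the $4\times 5$ snake obtained by adding one cell west of $(1,1)$ and one east of $(1,3)$. A $4\times 3$ block with at most two empty cells always contains a cycle, so $\amax(4,3,\n\s)=9$.

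Similarly, in a $5\times 7$ block, columns $1,3,5,7$ joined through $(5,2)$, $(1,4)$, $(5,6)$ form a crossing chain from $(1,1)$ to $(1,7)$ of area $23$. Hence $\amax(5,7,\n\s)\ge 23$.

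With these values, Lemma~\ref{lem:amax-convexity} yields at best $(8w+c+3)/3$ and $(22w+c+7)/7$, so the additive constant grows at every step. Already $\amax(4,7,\n\w\s)\le 11+9=20$ instead of the claimed $19$. Your fallback of cutting into narrower tabulated open pieces has the same defect, since those also overshoot the slope: $\amax(4,1,\n\s)=4>8/3$ and $\amax(5,1,\n\s)=5>22/7$.

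\paragraph{Why $h=2,3$ worked, and what is needed here.} This is where the analogy with Lemma~\ref{lem:amax-123}(iii),(v) breaks. For $h=2,3$ the width-$2$ open block attains exactly slope times width, because the denser width-$2$ U-shapes are connectable on one side only. From width $3$ on, a U can open on both sides.

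For $h=4$ your scheme can be repaired by taking period $6$: $\amax(4,6,\n\s)=16$, attained by $B_4B_4^h$. Each $4\times 3$ half holds at most $9$ cells, so a valid $17$-cell block would have, by symmetry, $9$ cells in columns $1$--$3$. Every $9$-cell $4\times 3$ snake forest is either the border $10$-cycle minus one cell, or one of four paths with both ends in the same outer column. In each case one of two things happens:
\begin{itemize}
\item column $4$ can hold at most one cell, so columns $4$--$6$ hold at most $7$;
\item every cell of column $1$ already has degree $2$, so no component is connectable on $\w$.
\end{itemize}

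For $h=5$ you would need an open block of some width $7k$ with maximal area $22k$, and nothing in your proposal provides one. The paper instead indicates an argument at the cut, in the case $h=5$ of Lemma~\ref{lem:upper-bound}. There, concatenations of maximal elements of $\SsfhwHv{5}{w}{\n\w\s}$ with maximal elements of $\SsfhwHv{5}{7}{\n\s}$ are shown to be invalid, so the naive sum bound is not attained. Making this work for all $w$ requires strengthening the induction hypothesis with boundary conditions on the open east column of the maximal half-open factors. This is the kind of information recorded by $|S|_{\E}$ and $|S|_{\sat}$.
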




\section{Main Result}

Let $\amax(h, w)$ the the maximal area that can be realized by a snake-like polyomino inscribed in a rectangle $h \times w$.
Also, for $w \in \N^*$, let
\begin{multicols}{2}
  \vspace{0pt}\strut
  \begin{align*}
    \ahmax(1, w) & = w \\
    \ahmax(2, w)
      & = \begin{cases}
        3w / 2, & \mbox{if $w \bmod 2 = 0$;} \\
        (3w + 1) / 2, & \mbox{if $w \bmod 2 = 1$;}
      \end{cases} \\
    \ahmax(3, w)
      & = \begin{cases}
        2w + 1, & \mbox{if $1 \leq w \leq 5$;} \\
        2w + 2, & \mbox{if $w \geq 6$.}
      \end{cases} \\
    \ahmax(4, w)
      & = \begin{cases}
        (8w + 1)/3, & \mbox{if $w = 4$;} \\
        (8w + 3)/3, & \mbox{if $w \bmod 3 = 0$;} \\
        (8w + 4)/3, & \mbox{if $w \bmod 3 = 1$;} \\
        (8w + 2)/3, & \mbox{if $w \bmod 3 = 2$.}
      \end{cases}
  \end{align*}
  \vspace{0pt}\strut
  \begin{align*}
    \ahmax(5, w)
      & = \begin{cases}
        3w + 4, & \mbox{if $w \in \{10, 11, 12\}$;} \\
        (22w + 14)/7, & \mbox{if $w \bmod 7 = 0$;} \\
        (22w + 13)/7, & \mbox{if $w \bmod 7 = 1$;} \\
        (22w + 12)/7, & \mbox{if $w \bmod 7 = 2$;} \\
        (22w + 11)/7, & \mbox{if $w \bmod 7 = 3$;} \\
        (22w + 10)/7, & \mbox{if $w \bmod 7 = 4$;} \\
        (22w + 9)/7, & \mbox{if $w \bmod 7 = 5$;} \\
        (22w + 15)/7, & \mbox{if $w \bmod 7 = 6$.}
      \end{cases}
  \end{align*}
\end{multicols}

The main result of this extended abstract is the following.

\begin{theorem}\label{thm:main}
  For $1 \leq h \leq 5$ and $w \in \N^*$, $\amax(h, w) = \ahmax(h, w)$.
\end{theorem}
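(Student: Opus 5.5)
The proof splits, for each height $h \in \{1,\dots,5\}$, into a lower bound $\amax(h,w) \geq \ahmax(h,w)$ by explicit construction and an upper bound $\amax(h,w) \leq \ahmax(h,w)$. By the transposition symmetry $\tau$ we may view $h$ as the fixed small dimension and $w$ as arbitrary. The case $h=1$ is trivial: a snake inscribed in a $1 \times w$ rectangle must fill the whole row, so $\amax(1,w)=w$.

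\textbf{Lower bounds.} For each $h \in \{2,3,4,5\}$ we exhibit a periodic snake pattern, written as a power $P^{1 \times n}$ of a period word $P$ of width $p$, where $p=2,2,3,7$ respectively. This is concatenated horizontally ($\hcat$) with a bounded left cap $L$ and right cap $R$, chosen according to the residue of $w$ modulo $p$. For $h=2$ the period is a staircase of density $3/2$. For $h=3$ it is the comb with density $2$ per column, and the two extra cells for $w\geq 6$ come from caps that bend into the corners. For $h=4$ and $h=5$ the periods have areas $8$ in width $3$ and $22$ in width $7$. The finitely many sporadic widths ($w\le 5$ for $h=3$, $w=4$ for $h=4$, $w\in\{10,11,12\}$ and small $w$ for $h=5$) are handled by explicit words found with Algorithm~\ref{algo:bsf-generation}. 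For each construction we check that the word is inscribed and that $G[W]$ is a chain. This is routine: the period preserves a single path entering on the west side and leaving on the east side, and each cap closes a path end.

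\textbf{Upper bounds.} A snake word $S$ of width $w$ with hull $D$ is cut as $S = S_1 \hcat M \hcat S_2$. Here $S_1$ has hull $\n\w\s$, $S_2$ has hull $\n\e\s$, and the middle part $M$ has hull $\n\s$ and is cut into blocks. By Lemma~\ref{lem:amax-convexity}, $\amax(h,w) \le \amax(h,w_1,\n\w\s) + \amax(h,w_m,\n\s) + \amax(h,w_2,\n\e\s)$. For $h=2,3$ we use Lemma~\ref{lem:amax-123}: items (iii) and (v) bound the middle part, and items (ii) and (iv) bound the ends. Choosing $w_1,w_2\in\{1,2\}$ according to parity reproduces $\ahmax$, up to the constant that must be sharpened for $h=3$, $w\geq 6$ versus $w\le 5$. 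For $h=4,5$ we instead split $S = S_1 \hcat S_2$ with $S_1$ of hull $\n\w\s$ and $S_2$ of hull $\n\e\s$ (the latter handled by the reflection symmetry, so the same bound applies). Lemma~\ref{lem:amax-4} is then applied to both halves, and we optimize over $w_1+w_2=w$, i.e. over the residues of $w_1,w_2$ modulo $3$ or $7$. This yields $\amax(h,w)\le \min_{w_1+w_2=w}(\ldots)$, and a case check on residues should give exactly $\ahmax(h,w)$, possibly after taking floors.

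\textbf{Main obstacle.} The main obstacle is making the upper-bound constants match exactly. Summing two one-sided bounds often overshoots $\ahmax$ by one or two, e.g. for $h=3$ with $w\geq 6$ and for $h=5$ in several residues. If this happens, I would first try splitting into three pieces instead, with a middle block of width $p$ or $2p$ whose hull is $\n\s$. Its exact value $\amax(h,p,\n\s)$ would be computed with Algorithm~\ref{algo:bsf-generation} and Lemma~\ref{lem:amax}, together with the finitely many end values $\amax(h,k,\n\w\s)$ for $k<p$. Failing that, the sporadic residual widths would be verified directly by exhaustive generation of $\SsfhwaHv{h}{w}{a+1}{D}=\emptyset$. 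Combining the matching lower and upper bounds then gives $\amax(h,w)=\ahmax(h,w)$.
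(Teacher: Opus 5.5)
Your lower bounds and your treatment of $h\le 3$ follow the paper: explicit words, then cutting a snake into an $\n\w\s$-piece, $\n\s$-pieces and an $\n\e\s$-piece, bounding via Lemma~\ref{lem:amax-convexity} with Lemma~\ref{lem:amax-123}, and settling small widths exhaustively. Your $h=4$ step is also the paper's, with the split point optimized rather than fixed. One detail of your lower-bound sketch is wrong: a $2$-row pattern of density $3/2$ cannot have period $2$. Every repeated $2\times 2$ block with three occupied cells creates a cell of degree $3$; the paper's period has width $4$.

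The genuine gap is the upper bound for $h=5$, and none of your fallbacks closes it. If $W$ is a snake word of width $k$, then $(W,\n\w\s)$, $(W,\n\e\s)$ and $(W,\n\s)$ are sided snake factors. So, by the lower-bound constructions, every piece satisfies $\amax(5,k,H)\ge\ahmax(5,k)\ge(22k+9)/7$. Hence, for every split $w=w_1+w_2$, the right-hand side of Lemma~\ref{lem:amax-convexity} is at least $(22w+18)/7$, whereas $\ahmax(5,w)\le(22w+15)/7$ for every $w\ge 13$. By integrality the additive bound overshoots by at least one for all these $w$. This holds however the split is chosen and however exactly the pieces are computed, and extra $\n\s$ blocks in the middle only add slack. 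Exhaustive search covers only finitely many widths.

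Residue bookkeeping that appears to succeed does so only through the printed constants of Lemma~\ref{lem:amax-4}, which are unreliable. For example, $S_{5,14}=D_5C_5^{1\times 4}D_5^r$ is a snake of area $46$, exceeding the printed bound $45$ for $\amax(5,14,\n\w\s)$. The missing idea is the non-additive ingredient the paper invokes for $h=5$. First, show that no concatenation of a maximal element of $\SsfhwHv{5}{w}{\n\w\s}$ with a maximal element of $\SsfhwHv{5}{7}{\n\s}$ is a snake factor, so appending a period of width $7$ gains strictly less than the sum of the two maxima. Second, show that maximal $\n\w\s$-factors satisfy boundary conditions that let this be iterated in $w$.

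The same phenomenon already appears at $h=4$ for $w\equiv_3 2$, so your ``optimize over $w_1+w_2=w$'' step is not sound there either. $S_{4,7}=A_4C_4$ is a snake of area $20$, so $\amax(4,7,\n\w\s)\ge 20>19=(8\cdot 7+1)/3$. That $19$ is the printed value of Lemma~\ref{lem:amax-4} your optimum at $w=11$ relies on, and the paper's own computation uses it too, with pieces of widths $7$ and $4$. With the true one-sided values, $\ahmax(4,w_1)+\ahmax(4,w_2)\ge(8w+5)/3$ whenever $w\equiv_3 2$ and $w\neq 8$. So every split of $w=11$ yields at least $31>30=\ahmax(4,11)$. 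Here too you need information about how extremal pieces fail to glue, not just their areas.
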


The proof of Theorem~\ref{thm:main} is divided in two cases, that are addressed in the following two lemmas.

\begin{lemma}\label{lem:lower-bound}
  For $1 \leq h \leq 5$ and $w \in \N^*$, $\amax(h, w) \geq \ahmax(h, w)$.
\end{lemma}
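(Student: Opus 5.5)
The lower bound is proved by explicit construction. For each $h \in \{1,\dots,5\}$ and each $w \in \N^*$, we exhibit a snake word $W \in \Wshw$ with $\area{W} = \ahmax(h,w)$. The constructions are periodic. A word is built as a horizontal concatenation $L \hcat P^{1 \times k} \hcat R$ of a left cap $L$, a periodic block $P$ repeated $k$ times, and a right cap $R$. The period of $P$ matches the denominators in the formula for $\ahmax$: period $1$ for $h=1$, $2$ for $h=2$, $2$ for $h=3$, $3$ for $h=4$, and $7$ for $h=5$.

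The density of $P$ matches the leading coefficient of $\ahmax$: $1$, $3/2$, $2$, $8/3$ and $22/7$ cells per column. The caps $L$ and $R$ are chosen, depending on the residue of $w$ modulo the period, so as to realize the additive constant.

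\textbf{Small heights.} These cases are immediate.
\begin{itemize}
\item For $h=1$, take the full row.
\item For $h=2$, take the staircase/comb pattern: the top row is full, and the bottom row has occupied cells in every other column, placed so that degree never exceeds $2$ and the graph remains a path. This gives $\lceil 3w/2 \rceil$ cells, and a short check shows it equals $\ahmax(2,w)$ in both parities.
\item For $h=3$, use the serpentine ``U-turn'' pattern with vertical strokes in alternate columns joined alternately at the top and at the bottom. This gives $2w+1$ cells. For $w \ge 6$, one extra cell is gained by modifying the two ends, using the unused corners near the extremities of the chain. We exhibit this modified end pattern explicitly and check that it still yields a chain.
\end{itemize}

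\textbf{Heights $4$ and $5$.} We first find, by hand or with Algorithm~\ref{algo:bsf-generation} run on small widths, a sided snake factor $P$ with hull $\n\s$. For $h=4$ its area is $8$ on width $3$; for $h=5$ its area is $22$ on width $7$. The block $P$ must have exactly one crossing component, connectable on both $\w$ and $\e$, entering and leaving at compatible rows so that $P \hcat P$ is again a sided snake factor. By Theorem~\ref{thm:snake-factor}(iv), this compatibility reduces to checking the crossing and terminal conditions on the two open sides.

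Next, for each residue class of $w$ and each listed exceptional width, we supply caps $L$ (hull $\n\w\s$) and $R$ (hull $\n\e\s$). These caps absorb the residual columns, end the chain with terminal cells, and realize the constant term. The exceptional widths are $w=4$ for $h=4$, and $w\in\{10,11,12\}$ for $h=5$. Correctness of $L \hcat P^{1\times k}\hcat R$ as a snake word follows by induction on $k$. Concatenation preserves the single crossing component, and the two ends are inscribed on all sides, so Theorem~\ref{thm:snake-factor}(i) applies.

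The area is then $\area{L} + k\,\area{P} + \area{R}$. A routine calculation verifies that this equals $\ahmax(h,w)$ in each residue class. The small widths below the first full period are handled by a finite table of explicit examples, which can be found with Algorithm~\ref{algo:bsf-generation}.

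\textbf{Main obstacle.} The hard step is $h=5$. We must find the $22/7$-dense period together with seven pairs of caps, plus the exceptional widths $w\in\{10,11,12\}$ where $3w+4$ exceeds the generic formula. We would first search computationally: enumerate sided snake factors of width $7$ with hull $\n\s$ and area $22$ that are self-concatenable. Then, for each residue $r$, we search caps of width at most $7+r$ and pick those achieving the required constant. We expect the exceptional cases to be isolated ad hoc examples whose existence is confirmed directly by this search.
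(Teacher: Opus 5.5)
\textbf{There is a genuine gap: the witnesses are missing, and the block structure you would search for cannot contain them.} Your overall plan matches the paper's: explicit periodic witnesses $L\hcat P^{1\times k}\hcat R$, a finite table, and induction on $k$. But this lemma is purely existential, so the witnesses \emph{are} the proof. You supply none for $h=4,5$, and you do not give the modified ends for $h=3$, $w\ge 6$.

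Here is why the template fails. Let $S_j$ be the set of occupied rows of column $j$ and $n_j=|S_j|$. Since $G[W]$ is induced, $c_j=|S_j\cap S_{j+1}|$ is exactly the number of chain edges between columns $j$ and $j+1$, so $n_j+n_{j+1}\le h+c_j$. Two crossing rows cannot be adjacent, since that would give a $4$-cycle, so $c_j\le 2$ when $h=4$.

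Suppose $P$ has exactly one crossing component and nothing else (a repeated terminal component would break $t\le 2$). Then the chain passes through $P\hcat P$ once, and its two endpoints lie on opposite sides of it. Every internal column boundary is therefore crossed an odd number of times, hence once. So $P\hcat P$ has at most $15<16$ cells: a single strand has density at most $5/2<8/3$.

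The paper's period is instead $B_4B_4^h$, of width $6$. It carries \emph{two} crossing components, namely wavy strands in rows $1$--$2$ and $3$--$4$, and $A_4$ is the U-turn joining them. Note also that $B_4\hcat B_4$ already contains a $2\times 2$ block of $\O$'s, which is why the reflected copy $B_4^h$ is needed. For $h=5$ the periodic block $G_5G_5^h$ (width $14$) likewise carries two strands, $E_5$ supplies the fold, and the three-rail columns $C_5$ absorb residues and intermediate widths.

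\textbf{The small heights have the same defect.} For $h=2$, a full top row with bottom cells under every other column is a comb. Any bottom cell below a non-extremal column gives its top neighbour degree $3$, so the construction fails for every $w\ge 4$. The paper uses the period-$4$ zigzag with rows $\O\O\O\E$ and $\O\E\O\O$.

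For $h=3$, strokes in alternate columns give $2w+1$ only for odd $w$. For even $w$ the last column can hold a single cell, giving $2w$; for example $8<9=\ahmax(3,4)$. For $w\ge 6$, the extra cell cannot come from retouching the ends of a chain that stays monotone. If every boundary is crossed once, then $n_j+n_{j+1}\le 4$, which caps the area at $2w+1$. The paper's witness is a hairpin: a $5$-column cap, rails on rows $1$ and $3$, and a full last column. The chain thus folds back over the whole width.

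This is the idea your proposal lacks. For large $w$ the extremal snakes are not monotone: their periodic part consists of two parallel strands, and the caps must supply the fold and both endpoints.
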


\begin{proof}
  It suffices to exhibit families of snakes having area $\ahmax(h, w)$ for $1 \leq h \leq 5$ and for any $w \in \N^*$.
  Let
  \[
    S_{1,w} = \O^{1 \times w}, \quad
    S_{2,w} = \left(\begin{array}{@{}c@{}c@{}c@{}c@{}}
      \O & \O & \O & \E \\[-2mm]
      \O & \E & \O & \O
    \end{array}\right)^{1 \times (w / 4)} \quad \text{and} \quad
    S_{3,w} = \begin{cases}
      \left(\begin{array}{@{}c@{}c@{}c@{}c@{}c@{}c@{}}
        \O & \O & \O & \O & \E & \\[-2.5mm]
        \O & \E & \E & \O & \O & \\[-2.5mm]
        \O & \O & \O & \E & \O &
      \end{array}\right)^{1 \times (w / 5)}, & \mbox{if $1 \leq w \leq 5$;} \\
      \begin{array}{@{}c@{}c@{}c@{}c@{}c@{}}
        \O & \O & \O & \E & \O \\[-2.5mm]
        \O & \E & \O & \O & \E \\[-2.5mm]
        \O & \O & \E & \O & \O
      \end{array}
      \left(\begin{array}{@{}c@{}c@{}}
        \O \\[-2.5mm]
        \E \\[-2.5mm]
        \O
      \end{array}\right)^{1 \times (w - 6)}
      \begin{array}{@{}c@{}c@{}}
        \O \\[-2.5mm]
        \O \\[-2.5mm]
        \O
      \end{array}, & \mbox{if $w \geq 6$;}
    \end{cases}
  \]
  Moreover, let
  \begin{align}
    S_{4,w} & = \begin{cases}
      A_4^{1 \times (w/2)},
        & \mbox{if $1 \leq w \leq 2$;} \\
      A_4 (B_4 B_4^h)^{1 \times((w-2)/6)},
        & \mbox{if $3 \leq w \leq 6$ or $w \bmod 3 \neq 1$;} \\
      A_4 (B_4 B_4^h)^{1 \times ((w-7)/6)} C_4,
        & \mbox{if $w \geq 7$ and $w \bmod 6 = 1$;} \\
      A_4 (B_4 B_4^h)^{1 \times ((w-7)/6)} B_4 C_4^h,
        & \mbox{if $w \geq 7$ and $w \bmod 6 = 4$.}
    \end{cases}
  \end{align}
  where
  \[
    A_4 = \begin{array}{@{}c@{}c@{}}
      \O & \O \\[-2mm]
      \O & \E \\[-2mm]
      \O & \E \\[-2mm]
      \O & \O
    \end{array}, \quad
    B_4 = \begin{array}{@{}c@{}c@{}c@{}}
      \O & \O & \O \\[-2mm]
      \E & \E & \E \\[-2mm]
      \O & \O & \O \\[-2mm]
      \O & \E & \O
    \end{array}, \quad
    B_4^h = \begin{array}{@{}c@{}c@{}c@{}}
      \O & \E & \O \\[-2mm]
      \O & \O & \O \\[-2mm]
      \E & \E & \E \\[-2mm]
      \O & \O & \O
    \end{array}, \quad
    C_4 = \begin{array}{@{}c@{}c@{}c@{}c@{}c@{}}
      \O & \E & \O & \O & \O \\[-2mm]
      \E & \O & \O & \E & \O \\[-2mm]
      \O & \O & \E & \O & \O \\[-2mm]
      \O & \E & \O & \O & \E
    \end{array} \quad \mbox{and} \quad
    C_4^h = \begin{array}{@{}c@{}c@{}c@{}c@{}c@{}}
      \O & \E & \O & \O & \E \\[-2mm]
      \O & \O & \E & \O & \O \\[-2mm]
      \E & \O & \O & \E & \O \\[-2mm]
      \O & \E & \O & \O & \O
    \end{array}.
  \]
  Finally, let
  \begin{align}
    S_{5,w} & = \begin{cases}
      A_5,
        & \mbox{if $w = 1$;} \\
      B_5 C_5^{1 \times (w-2)} B_5^h,
        & \mbox{if $2 \leq w \leq 5$;} \\
      B_5 C_5^{1 \times (w-6)} D_5^r,
        & \mbox{if $6 \leq w \leq 9$;} \\
      D_5 C_5^{1 \times (w-10)} D_5^r,
        & \mbox{if $10 \leq w \leq 19$;} \\
      E_5 C_5^{1 \times m} F_5 (G_5 G_5^h)^{1 \times n} H_5^h,
        & \mbox{if $w \geq 20$ and $0 \leq w \bmod 14 \leq 6$;} \\
      E_5 C_5^{1 \times m} F_5 (G_5 G_5^h)^{1 \times n} H_5,
        & \mbox{if $w \geq 20$ and $7 \leq w \bmod 14 \leq 13$.}
    \end{cases}
  \end{align}
  where $m = (w+1) \bmod 7 + 1$, $n = (w-m-12)/14$ and
  \[
    A_5 = \begin{array}{@{}c@{}}
      \O \\[-2mm]
      \O \\[-2mm]
      \O \\[-2mm]
      \O \\[-2mm]
      \O
    \end{array}, \enskip
    B_5 = \begin{array}{@{}c@{}}
      \O \\[-2mm]
      \E \\[-2mm]
      \O \\[-2mm]
      \O \\[-2mm]
      \O
    \end{array}, \enskip
    B_5^h = \begin{array}{@{}c@{}}
      \O \\[-2mm]
      \O \\[-2mm]
      \O \\[-2mm]
      \E \\[-2mm]
      \O
    \end{array}, \enskip
    C_5 = \begin{array}{@{}c@{}}
      \O \\[-2mm]
      \E \\[-2mm]
      \O \\[-2mm]
      \E \\[-2mm]
      \O
    \end{array}, \enskip
    D_5 = \begin{array}{@{}c@{}c@{}c@{}c@{}c@{}}
      \O & \O & \E & \O & \O \\[-2mm]
      \O & \E & \O & \O & \E \\[-2mm]
      \O & \E & \O & \E & \O \\[-2mm]
      \O & \E & \O & \E & \O \\[-2mm]
      \O & \O & \O & \E & \O
    \end{array}, \enskip
    D_5^r = \begin{array}{@{}c@{}c@{}c@{}c@{}c@{}}
      \O & \E & \O & \O & \O \\[-2mm]
      \O & \E & \O & \E & \O \\[-2mm]
      \O & \E & \O & \E & \O \\[-2mm]
      \E & \O & \O & \E & \O \\[-2mm]
      \O & \O & \E & \O & \O
    \end{array}, \enskip
    E_5 = \begin{array}{@{}c@{}c@{}c@{}c@{}c@{}}
      \O & \O & \O & \E & \O \\[-2mm]
      \O & \E & \O & \O & \O \\[-2mm]
      \O & \E & \E & \E & \E \\[-2mm]
      \O & \E & \O & \O & \O \\[-2mm]
      \O & \O & \O & \E & \O
    \end{array},
  \]
  \[
    F_5 = \begin{array}{@{}c@{}c@{}}
      \E & \O \\[-2mm]
      \O & \O \\[-2mm]
      \O & \E \\[-2mm]
      \E & \E \\[-2mm]
      \O & \O
    \end{array}, \enskip
    G_5 = \begin{array}{@{}c@{}c@{}c@{}c@{}c@{}c@{}c@{}}
      \O & \O & \O & \O & \E & \O & \O \\[-2mm]
      \E & \E & \E & \O & \O & \O & \E \\[-2mm]
      \O & \O & \O & \E & \E & \E & \E \\[-2mm]
      \O & \E & \O & \E & \O & \O & \O \\[-2mm]
      \O & \E & \O & \O & \O & \E & \O
    \end{array}, \enskip
    G_5^h = \begin{array}{@{}c@{}c@{}c@{}c@{}c@{}c@{}c@{}}
      \O & \E & \O & \O & \O & \E & \O \\[-2mm]
      \O & \E & \O & \E & \O & \O & \O \\[-2mm]
      \O & \O & \O & \E & \E & \E & \E \\[-2mm]
      \E & \E & \E & \O & \O & \O & \E \\[-2mm]
      \O & \O & \O & \O & \E & \O & \O
    \end{array}, \enskip
    H_5 = \begin{array}{@{}c@{}c@{}c@{}c@{}c@{}}
      \O & \O & \O & \O & \O \\[-2mm]
      \E & \E & \E & \E & \O \\[-2mm]
      \O & \O & \O & \E & \O \\[-2mm]
      \O & \E & \O & \E & \O \\[-2mm]
      \O & \E & \O & \O & \O
    \end{array}, \enskip
    H_5^h = \begin{array}{@{}c@{}c@{}c@{}c@{}c@{}}
      \O & \E & \O & \O & \O \\[-2mm]
      \O & \E & \O & \E & \O \\[-2mm]
      \O & \O & \O & \E & \O \\[-2mm]
      \E & \E & \E & \E & \O \\[-2mm]
      \O & \O & \O & \O & \O
    \end{array}.
  \]
  Then, for any $h = 1, 2, 3, 4, 5$ and $w \in \N^*$, the dimensions of $S_{h,w}$ are $h \times w$ and $\area{S_{h,w}} = \ahmax(h, w)$.
  Moreover, by using induction on $w$, we have that $S_{h,w}$ is a snake, concluding the proof.
\end{proof}
There is  one last result to prove.

\begin{lemma}\label{lem:upper-bound}
  For $1 \leq h \leq 5$ and $w \in \N$, $\amax(h, w) \leq \ahmax(h, w)$.
\end{lemma}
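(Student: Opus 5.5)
The plan is to bound the snake word through its sided snake factors. A snake $W \in \Wshw$ is the same thing as a sided snake factor $(W, D)$, so $\amax(h,w) = \amax(h,w,D)$. We then cut the $h \times w$ rectangle vertically into blocks and apply Lemma~\ref{lem:amax-convexity} repeatedly. If $w = w_1 + w_2 + \cdots + w_k$, this gives
\[
  \amax(h,w,D) \leq \amax(h,w_1,\n\w\s) + \sum_{i=2}^{k-1} \amax(h,w_i,\n\s) + \amax(h,w_k,\n\e\s).
\]
A single cut $w = w_1 + w_2$ already gives $\amax(h,w,D) \leq \amax(h,w_1,\n\w\s) + \amax(h,w_2,\n\e\s)$. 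By the symmetry of reflection about a vertical axis, $\amax(h,w_2,\n\e\s) = \amax(h,w_2,\n\w\s)$. So everything reduces to bounds on the one-side-open hulls $\n\w\s$ and, for the middle blocks, $\n\s$.

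For $h=1$ the bound $\amax(1,w) \leq w$ is trivial. For $h=2$ we use Lemma~\ref{lem:amax-123}(ii),(iii). For even $w$, cut into blocks of width $2$, each contributing at most $3$, which gives $3w/2$. For odd $w$, split off one column of area at most $2$, which gives $3(w-1)/2 + 2 = (3w+1)/2$.

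For $h=3$ we use Lemma~\ref{lem:amax-123}(iv),(v), with two end blocks of width $2$ (area $5$ each) and middle blocks of width $2$ (area $4$ each). For even $w$ this gives $2w+2$. For odd $w$ we additionally use a column of area $3$, which again gives at most $2w+2$. The sharper value $2w+1$ for $w \leq 5$ is finitely many cases, which we settle with Algorithm~\ref{algo:bsf-generation} and Lemma~\ref{lem:amax} applied with $H = D$.

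For $h=4,5$ we use the single cut with $w_1 = \lfloor w/2 \rfloor$ and $w_2 = \lceil w/2 \rceil$ and add the two bounds of Lemma~\ref{lem:amax-4} for the appropriate residues. For $h=4$ one checks that $(8w_1+r_1)/3 + (8w_2+r_2)/3$ matches $\ahmax(4,w)$. If some residue pair overshoots, we instead choose a split $w_1 \equiv_3 0$, whose constant $3$ is the best case, so that the remainder is forced into the right class. For $h=5$ the residues are modulo $7$, and we choose $w_1$ in a good residue class so that the two constants sum, after dividing by $7$ and taking the floor (the area is an integer), to the constant in $\ahmax(5,w)$. This is a finite table check over the pairs of residues. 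The exceptional values $w \in \{10,11,12\}$, and any small $w$ where the sum of the two bounds is not tight, are again handled by exhaustive generation with Algorithm~\ref{algo:bsf-generation} and Lemma~\ref{lem:amax}.

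The main obstacle is the bookkeeping for $h=5$ (and, to a lesser extent, $h=4$). For every residue class we must exhibit a split whose two one-sided bounds, after flooring, add exactly to $\ahmax$. If no single split suffices for some class, the fallback is a three-block decomposition $\n\w\s \mid \n\s \mid \n\e\s$. That would require computing a periodic bound on $\amax(5,w,\n\s)$, of the form $22w/7 + c$, by the same generation-plus-induction technique used for Lemma~\ref{lem:amax-123}(v).
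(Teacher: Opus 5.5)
\paragraph{Main gap: the case $h=5$.} Here the single-cut residue check cannot close, and the three-block fallback cannot repair it.

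Write the bound of Lemma~\ref{lem:amax-4} as $(22w+d(w))/7$ and write $\ahmax(5,w)=(22w+e(w))/7$. For $w \bmod 7=0,\dots,6$ one has $d=7,6,12,11,10,9,8$ and $e=14,13,12,11,10,9,15$. The exceptional widths $1,6,7,8$ only add $7$ to $d$. Each one-sided bound is already an integer, so flooring gains nothing. Hence a split $w=w_1+w_2$ works if and only if $d(w_1)+d(w_2)\le e(w)$.

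For every $w\equiv 3,4,5 \pmod 7$ with $w\ge 17$, all residue pairs give $d(w_1)+d(w_2)\ge e(w)+7$, i.e.\ an overshoot of at least one cell. For example, at $w=17$ the best splits give $8+48=11+45=14+42=17+39=56>55=\ahmax(5,17)$. These are infinitely many widths, so exhaustive generation cannot absorb them.

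A periodic bound on $\amax(5,\cdot,\n\s)$ does not help either. By Lemma~\ref{lem:amax-convexity}, $\amax(5,w_1+w_2,\n\w\s)\le\amax(5,w_1,\n\w\s)+\amax(5,w_2,\n\s)$. So any decomposition $\n\w\s\mid\n\s\mid\cdots\mid\n\e\s$ is never sharper than a two-block bound computed with the true one-sided maxima. Those maxima are attained here: for $w=17$ one can exhibit, for every split, an $\n\w\s$-factor and an $\n\e\s$-factor of total area at least $56$ (e.g.\ using prefixes of $S_{5,27}$). So no purely additive strip argument proves the $h=5$ bound.

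The missing ingredient is non-additive. You must show that pieces of maximal area cannot be glued into a snake factor, so that one cell is lost at the interface. This means pinning down the boundary columns of area-maximal $\n\w\s$-factors and checking their incompatibility, e.g.\ via the liberty and parity conditions of Theorem~\ref{thm:snake-factor}. This is the route the paper sketches: it studies concatenations of maximal elements of $\SsfhwHv{5}{w}{\n\w\s}$ with $\SsfhwHv{5}{7}{\n\s}$, and proves boundary conditions on maximal $\n\w\s$-factors.

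\paragraph{The case $h=4$.} Your concrete claims are false here too, though fixable. The halving split overshoots by one in two classes:
\begin{itemize}
\item $w\equiv 0\pmod 6$: both halves are $\equiv_3 0$; e.g.\ $9+9=18>17$ at $w=6$.
\item $w\equiv 5\pmod 6$: the halves have residues $2,0$; e.g.\ $14+17=31>30$ at $w=11$.
\end{itemize}
Your fallback goes the wrong way. Residue $0$ carries the \emph{largest} constant $3$, and forcing $w_1\equiv_3 0$ reproduces exactly these bad pairs. What works is a block of residue $1$, whose constant is $1$. For instance, the paper's split $w_1=w-4$, $w_2=4$ with $\amax(4,4,\n\e\s)\le 11$ yields $\ahmax(4,w)$ for all $w\ge 6$.

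\paragraph{The case $h=3$.} The extra column must be an end block, i.e.\ the split is $1+(w-3)+2$. Placed among the $\n\s$ blocks, it gives $2w+3$. With that reading your $h=2,3$ arguments are fine. Your $h=3$ split is in fact cleaner than the paper's, whose middle block has odd width $w-w_3-1$ and so falls outside the scope of Lemma~\ref{lem:amax-123}(v).
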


\begin{proof}
  Let $S \in \SsfhwHv{h}{w}{D}$ with $\area{S} = \amax(h,w)$, i.e. $S$ is a sided snake of maximal area inscribed in a rectangle of dimensions $h \times w$.
  We show that $\area{S} \leq \ahmax(h, w)$.

  Case $h = 1$.
  It suffices to notice that $\area{S} \leq w = \ahmax(1,w)$.

  Case $h = 2$.
  The cases $w = 1$ and $w = 2$ can be proved by exhaustive verification.
  Now, assume that $w \geq 3$.
  Let $w_3 = 1 + (w - 1) \bmod 2$ and $w_2 = w - w_3 - 2$.
  Notice that $w_2 \bmod 2 = 0$.
  Then $S = S_1S_2S_3$, for some $S_1 \in \SsfhwHv{2}{2}{\n\w\s}$, $S_2 \in \SsfhwHv{2}{w_2}{\n\s}$ and $S_3 \in \SsfhwHv{2}{w_3}{\n\e\s}$.
  By Lemma~\ref{lem:amax-123}(ii)-(iii),
  \begin{align*}
    \area{S}
      & = \area{S_1} + \area{S_2} + \area{S_3} \\
      & \leq 3 + 3w_2/2 + (w_3 + 1) = (3w + 2 - w_3)/2 = \begin{cases}
          3w/2, & \mbox{if $w \bmod 2 = 0$;} \\
          (3w + 1)/2, & \mbox{if $w \bmod 2 = 1$.}
        \end{cases} \\
      & = \amax(2, w)
  \end{align*}

  Case $h = 3$.
  This case follows from \cite{blondin2025maximal}.
  However, for sake of consistency, we provide an alternate proof based on the tools introduced in the previous pages.
  First, the cases $1 \leq w \leq 5$ can be proved by exhaustive verification.
  Now, assume that $w \geq 6$.
  Let $w_3 = 1 + (w - 1) \bmod 2$ and $w_2 = w - w_3 - 1$.
  Notice that $w_2 \bmod 2 = 0$.
  Then $S = S_1S_2S_3$, for some $S_1 \in \SsfhwHv{3}{1}{\n\w\s}$, $S_2 \in \SsfhwHv{3}{w_2}{\n\s}$ and $S_3 \in \SsfhwHv{3}{w_3}{\n\e\s}$.
  By Lemma~\ref{lem:amax-123}(ii)-(iii),
  \begin{align*}
    \area{S}
      & = \area{S_1} + \area{S_2} + \area{S_3} \leq 3 + 4w_2/2 + (2w_3 + 1) = 2w + 2 = \amax(3, w)
  \end{align*}

  Case $h = 4$.
  The cases $1 \leq w \leq 5$ can be proved by exhaustive verification.
  Now, assume that $w \geq 6$.
  Then $S = S_1S_2$, for some $S_1 \in \SsfhwHv{4}{w - 4}{\n\w\s}$ and $S_2 \in \SsfhwHv{4}{4}{\n\e\s}$.
  By Lemma~\ref{lem:amax-4}(iv),
  \begin{align*}
    \area{S} = \area{S_1} + \area{S_2}
      & \leq \left(\begin{cases}
        (8(w - 4) + 3)/3, & \mbox{if $(w - 4) \bmod 3 = 0$;} \\
        (8(w - 4) + 1)/3, & \mbox{if $(w - 4) \bmod 3 = 1$;} \\
        (8(w - 4) + 2)/3, & \mbox{if $(w - 4) \bmod 3 = 2$.}
      \end{cases}\right)
      + 11 \\
      & = \begin{cases}
        (8w + 4)/3, & \mbox{if $w \bmod 3 = 1$;} \\
        (8w + 2)/3, & \mbox{if $w \bmod 3 = 2$;} \\
        (8w + 3)/3, & \mbox{if $w \bmod 3 = 0$.}
      \end{cases}
  \end{align*}

  Case $h = 5$.
  The structure of the proof is similar to the case $h = 4$, but requires to study more cases.
  In particular, one shows that all concatenations of maximal elements of $\SsfhwHv{5}{w}{\n\w\s}$ and $\SsfhwHv{5}{7}{\n\s}$ yield invalid concatenations for small values of $w$, and that maximal elements of $\SsfhwHv{5}{w}{\n\w\s}$ for any $w$ satisfy some boundary conditions.
\end{proof}

\begin{proof}[Proof of Theorem~\ref{thm:main}]
  Follows from Lemmas \ref{lem:lower-bound} and \ref{lem:upper-bound}.
\end{proof}


\section{Conclusion}

Computations not reported here for larger values of $h$ suggest that similar formulas might exist for any $h$.
Hence, we believe that the results presented in the previous sections introduce arguments that can be extended for the area of snake-like polyominoes in any rectangle.


\bibliographystyle{eptcs}
\bibliography{gascom2026}

\end{document}